\documentclass[10pt,twocolumn]{article}

\usepackage{amsmath,amssymb,amsthm,bm}
\usepackage{graphicx}
\usepackage{xcolor}
\usepackage{booktabs}
\usepackage{microtype}
\usepackage[margin=1.65cm,columnsep=0.65cm]{geometry}
\usepackage[numbers,sort&compress]{natbib}
\usepackage[hidelinks]{hyperref}
\graphicspath{{figures/}}

\newtheorem{theorem}{Theorem}
\newtheorem{proposition}{Proposition}
\newtheorem{corollary}{Corollary}

\newcommand{\dd}{\mathrm{d}}
\newcommand{\e}{\mathrm{e}}
\newcommand{\ii}{\mathrm{i}}
\newcommand{\E}{\mathbb{E}}
\newcommand{\Var}{\operatorname{Var}}
\newcommand{\wrap}{\operatorname{wrap}}

\begin{document}

\title{Operational Emergence of a Global Phase under Time-Dependent Coupling 
in Oscillator Networks}

\author{Ver\'onica Sanz\\[2pt]
\small Theoretical Physics Department, Universitat de Val\`encia, Spain\\
\small Instituto de F\'isica Corpuscular (IFIC), CSIC--Universitat de Val\`encia, Spain}

\date{\today}

\maketitle

\begin{abstract}
Time-dependent coupling is often interpreted as introducing competition between a protocol rate and an intrinsic synchronization rate.  We show that this interpretation is unavailable in the identical, overdamped Kuramoto model when time dependence enters only through a scalar multiplier of a fixed coupling field.  The accumulated coupling $S(t)=\int_0^tK(u)\dd u$ is then an exact dynamical clock: equal-exposure protocols traverse the same autonomous orbit, and integrable decays produce finite-exposure arrest rather than deterministic loss of adiabatic tracking.

This clock suggests a classification under time reparametrization.  Deterministic orbit observables and winding sectors are invariant; an additive perturbation $\varepsilon\bm G$ has a leading response that is a linear functional of the reciprocal schedule $w(s)=1/K(t(s))$ against a kernel fixed by the autonomous orbit; and inertia or noncommuting graph generators lie outside this reciprocal-weighted class.  Laboratory-time white noise has the corresponding $w$-weighted covariance.  We derive exact mode covariances, prove a universal terminal bang--bang optimum under bounded coupling and fixed exposure, and validate both results in nonlinear networks.

We also separate information in the full phase sample from information retained by the order parameter.  Conditional observation error of $\arg Z$ follows a $1/(NR^2)$ law with a configuration-dependent prefactor, whereas a known-template experiment has Fisher information $N/\sigma^2$ independently of $R$; the efficiency of $\arg Z$ is $R^2/q_2$.  Finally, periodic rings exhibit stable winding sectors with strong local but vanishing global order.  Conventional Kibble--Zurek freeze-out requires a $K$-dependent critical structure and is therefore absent in the exact-clock sector; the ingredients that create such a transition also break the reduction.
\end{abstract}

\section{Introduction}

The Kuramoto model provides a canonical description of collective phase ordering in physical, biological, and engineered systems~\cite{Kuramoto1984,Acebron2005,DorflerBullo2014}.  On a fixed undirected graph with adjacency matrix $A$, identical overdamped oscillators obey
\begin{equation}
 \dot\theta_i=K\sum_{j=1}^{N}A_{ij}\sin(\theta_j-\theta_i),
 \label{eq:autonomous-kuramoto}
\end{equation}
and their global coherence is summarized by
\begin{equation}
 Z=R\e^{\ii\Psi}:=\frac{1}{N}\sum_{j=1}^{N}\e^{\ii\theta_j}.
 \label{eq:order-parameter}
\end{equation}
Time-dependent frequencies, couplings, and graph structures arise naturally in open and controlled oscillator systems~\cite{Petkoski2012,PietrasDaffertshofer2016,Faggian2019}.  In such general settings, competition between protocol and relaxation times can be genuine: oscillator-specific forcing changes the vector field~\cite{Petkoski2012}, network rewiring competes with local synchronization and desynchronization~\cite{Faggian2019}, and quenches through a symmetry-breaking transition can freeze topological defects~\cite{Miranda2013}.  None of these mechanisms is equivalent to multiplying a fixed deterministic vector field by a scalar function of time.

The restricted scalar case therefore requires separate treatment.  A changing parameter produces a genuine tracking problem only if it changes the vector field's orbits, equilibria, invariant measure, or stability structure.  In the identical noiseless overdamped model, a scalar $K(t)$ does none of these: it multiplies the complete vector field.  Its only deterministic effect is to change the speed at which a fixed autonomous orbit is traversed.  This is an elementary instance of time reparametrization, or orbital equivalence~\cite{GelfertMotter2010}; the contribution here is not the reparametrization itself, but its systematic consequences and the diagnostic programme it implies for synchronization protocols.

The purpose of this paper is to establish the relevant boundary precisely through a single organizing principle: quantities are classified by their transformation under the exposure clock.  The first class is invariant and includes deterministic orbits, spectral exposures, and deterministic winding-sector selection.  The second is reciprocal weighted: constant frequencies, pinning, external drift, and noise become sensitive to how the fixed laboratory-time mass is distributed along the autonomous orbit.  The third lies outside scalar reparametrization, as occurs for inertia, explicitly time-dependent forcing, or noncommuting graph generators.

We derive and test representative results in each class.  Besides exact clock equivalence, we obtain a general first-order response kernel for additive perturbations, an exact covariance and bounded-control optimum for noisy modes, and a statistical efficiency result showing precisely how compression to $Z$ discards phase information.  Periodic rings then isolate topology as an invariant obstruction to global coherence.  This classification also identifies the boundary of conventional Kibble--Zurek reasoning: a convincing claim of critical freeze-out must first exhibit the clock-breaking ingredient that creates a $K$-dependent critical structure.

\section{Deterministic scalar protocols}
\label{sec:deterministic}

\subsection{Model and gradient structure}

Let $G=(V,E)$ be a fixed connected undirected weighted graph, with $A_{ij}=A_{ji}\geq0$ and $A_{ii}=0$.  We consider
\begin{equation}
 \dot{\bm\theta}(t)=K(t)\bm F(\bm\theta(t)),
 \qquad
 F_i(\bm\theta)=\sum_jA_{ij}\sin(\theta_j-\theta_i),
 \label{eq:scalar-protocol}
\end{equation}
on the $N$-torus.  For the clock reduction, $K$ may be any locally integrable signed function; the Kuramoto field is complete because the torus is compact.  The monotone-exposure interpretation, energy dissipation, and all protocols used below assume $K\geq0$.  Define the accumulated coupling exposure
\begin{equation}
 S(t):=\int_0^tK(u)\dd u.
 \label{eq:exposure}
\end{equation}
The autonomous unit-coupling system is
\begin{equation}
 \frac{\dd\bm\vartheta}{\dd s}=\bm F(\bm\vartheta),
 \qquad \bm\vartheta(0)=\bm\theta_0,
 \label{eq:unit-flow}
\end{equation}
and its flow is denoted by $\Phi_s(\bm\theta_0)$.

The Kuramoto field is the negative gradient of
\begin{equation}
 U(\bm\theta)=-\sum_{i<j}A_{ij}\cos(\theta_i-\theta_j).
 \label{eq:potential}
\end{equation}
Consequently, when $K(t)\geq0$,
\begin{equation}
 \frac{\dd U}{\dd t}=-K(t)\|\nabla U\|^2\leq0,
 \qquad
 \frac{\dd U}{\dd s}=-\|\nabla U\|^2.
 \label{eq:gradient-dissipation}
\end{equation}
Changing $K(t)$ changes the rate of descent in physical time, but not the energy landscape or the orbit in configuration space.

\subsection{Exact clock equivalence}

\begin{theorem}[Clock equivalence]
\label{thm:clock}
Let $\bm F$ be locally Lipschitz and generate a complete flow, and let $K(t)$ be locally integrable, with either sign.  The unique solution of Eq.~\eqref{eq:scalar-protocol} is
\begin{equation}
 \boxed{\bm\theta(t)=\Phi_{S(t)}(\bm\theta_0)}.
 \label{eq:clock-equivalence}
\end{equation}
In particular, if two protocols $K_a$ and $K_b$ satisfy
$S_a(t_a)=S_b(t_b)$, then
\begin{equation}
 \bm\theta_a(t_a)=\bm\theta_b(t_b)
 \label{eq:equal-exposure}
\end{equation}
for the same initial state.
\end{theorem}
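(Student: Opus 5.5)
The proof is a verification followed by a uniqueness argument. Since $K$ is only locally integrable, Eq.~\eqref{eq:scalar-protocol} must be read in the Carathéodory sense: a solution is an absolutely continuous $\bm\theta$ with $\bm\theta(t)=\bm\theta_0+\int_0^tK(u)\bm F(\bm\theta(u))\dd u$. Under this reading, $S$ is absolutely continuous with $\dot S=K$ almost everywhere. Because the flow of Eq.~\eqref{eq:unit-flow} is complete, $s\mapsto\Phi_s(\bm\theta_0)$ is defined for every real $s$, and this covers negative exposures as well, which is what allows $K$ of either sign. It is also $C^1$, with derivative $\bm F(\Phi_s(\bm\theta_0))$.

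\textbf{Existence.} Set $\bm\eta(t):=\Phi_{S(t)}(\bm\theta_0)$. This is the composition of a $C^1$ map with an absolutely continuous function, so it is absolutely continuous. The chain rule for such compositions gives
\[
\dot{\bm\eta}(t)=\bm F(\Phi_{S(t)}(\bm\theta_0))\,K(t)=K(t)\bm F(\bm\eta(t))
\]
for almost every $t$, and $\bm\eta(0)=\Phi_0(\bm\theta_0)=\bm\theta_0$. To avoid invoking the chain rule on AC functions, one can argue directly instead. Write $\bm\eta(t)-\bm\theta_0=\int_0^{S(t)}\bm F(\Phi_\sigma(\bm\theta_0))\dd\sigma$, then substitute $\sigma=S(u)$. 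The change-of-variables formula for absolutely continuous $S$ holds without any monotonicity assumption, since the integrand $\bm F\circ\Phi_{(\cdot)}$ is continuous. This yields exactly the integral equation, with no restriction on the sign of $K$.

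\textbf{Uniqueness.} This is the step I expect to need the most care, because $K$ is unbounded in general and of arbitrary sign. Take any Carathéodory solution $\bm\theta$ on an interval $[0,T]$. It is continuous, so its image is a compact set $C$, and $\bm F$ is Lipschitz with some constant $L$ on a neighbourhood of $C\cup\bm\eta([0,T])$. Put $\bm e=\bm\theta-\bm\eta$. Subtracting the two integral equations gives
\[
|\bm e(t)|\le\int_0^t|K(u)|\,L\,|\bm e(u)|\dd u.
\]
Grönwall's inequality, applied with the locally integrable weight $L|K|$, then forces $\bm e\equiv0$; the same argument runs on negative time intervals. A standard continuation argument extends this to the full maximal interval of existence. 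On the torus the question of local versus global existence does not arise, because $\bm F$ is globally Lipschitz there.

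\textbf{Equal-exposure corollary.} Eq.~\eqref{eq:equal-exposure} follows at once. Both trajectories start from the same $\bm\theta_0$, so $\bm\theta_a(t_a)=\Phi_{S_a(t_a)}(\bm\theta_0)=\Phi_{S_b(t_b)}(\bm\theta_0)=\bm\theta_b(t_b)$.
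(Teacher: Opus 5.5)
Your proposal is correct and follows the paper's proof: you verify that $\Phi_{S(t)}(\bm\theta_0)$ solves the equation using the almost-everywhere chain rule for absolutely continuous $S$, and then appeal to uniqueness. The only addition is that you prove the Carath\'eodory uniqueness step explicitly, via Gr\"onwall with the integrable weight $L|K|$, where the paper simply asserts it.
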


\begin{proof}
Set $\bm x(t)=\Phi_{S(t)}(\bm\theta_0)$.  Since $S$ is absolutely continuous, the chain rule holds almost everywhere and gives
\begin{equation*}
 \dot{\bm x}(t)=\dot S(t)\,\partial_s\Phi_s(\bm\theta_0)|_{s=S(t)}
 =K(t)\bm F(\bm x(t)).
\end{equation*}
Moreover $\bm x(0)=\bm\theta_0$.  Uniqueness identifies $\bm x$ with the solution of Eq.~\eqref{eq:scalar-protocol}.  The equal-exposure statement follows immediately.
\end{proof}

\begin{corollary}[Finite-exposure arrest]
\label{cor:finite-exposure}
If $K\geq0$ and $S_\infty:=\int_0^\infty K(t)\dd t<\infty$, then
\begin{equation}
 \lim_{t\to\infty}\bm\theta(t)=\Phi_{S_\infty}(\bm\theta_0).
 \label{eq:finite-exposure-limit}
\end{equation}
Thus the protocol stops the system at a finite autonomous time.  This is finite-exposure arrest, not failure to follow a moving equilibrium.
\end{corollary}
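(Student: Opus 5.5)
The plan is to derive the corollary directly from Theorem~\ref{thm:clock}, using only two facts: $S(t)$ converges monotonically, and the autonomous flow depends continuously on time. First, since $K\geq0$ is locally integrable, $S(t)=\int_0^tK(u)\dd u$ is nondecreasing and absolutely continuous. Because $S_\infty<\infty$, monotone convergence gives $S(t)\uparrow S_\infty$ as $t\to\infty$. By Theorem~\ref{thm:clock}, $\bm\theta(t)=\Phi_{S(t)}(\bm\theta_0)$ for every $t\geq0$. The whole question therefore reduces to the behaviour of the single autonomous trajectory $s\mapsto\Phi_s(\bm\theta_0)$ on the compact interval $[0,S_\infty]$.

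Second, I will use continuity of the flow in its time argument. For fixed $\bm\theta_0$, the map $s\mapsto\Phi_s(\bm\theta_0)$ is a solution of Eq.~\eqref{eq:unit-flow}, so it is $C^1$ and in particular continuous. Completeness of the flow, which holds automatically on the compact torus, guarantees that it is defined at $s=S_\infty$. Composing the continuous map $s\mapsto\Phi_s(\bm\theta_0)$ with the convergent function $S(t)\to S_\infty$ gives $\bm\theta(t)\to\Phi_{S_\infty}(\bm\theta_0)$, which is Eq.~\eqref{eq:finite-exposure-limit}. If a quantitative statement is wanted, I can add the bound $|\bm\theta(t)-\Phi_{S_\infty}(\bm\theta_0)|\leq \sup\|\bm F\|\,(S_\infty-S(t))$. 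Here $\sup\|\bm F\|\leq \sum_{i,j}A_{ij}$ by boundedness of the sine, so the remaining distance is controlled by the remaining exposure $\int_t^\infty K$.

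Third, the interpretive sentence follows from the fact that $\Phi_{S_\infty}(\bm\theta_0)$ is generically not an equilibrium. It is a point reached at finite autonomous time, and by Eq.~\eqref{eq:gradient-dissipation} $U$ is still strictly decreasing there unless $\nabla U=0$. The limit is therefore a frozen point along the fixed orbit, not a lag behind a moving attractor; the attractor structure is unchanged by $K$.

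I expect no serious obstacle. The one point to watch is that the limit should be taken on the torus, or in $\mathbb{R}^N$ for a lifted solution. Continuity of the flow holds in either setting, so the argument is the same.
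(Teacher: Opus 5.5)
Your proposal is correct and follows the route the paper intends: the paper gives no separate proof, treating the corollary as immediate from Theorem~\ref{thm:clock}, where $S(t)\uparrow S_\infty$ together with continuity of $s\mapsto\Phi_s(\bm\theta_0)$ yields the limit, and your bound $\sup\|\bm F\|\,(S_\infty-S(t))$ is a valid quantitative addition. For the interpretive sentence you can drop ``generically'': by uniqueness of solutions, $\Phi_{S_\infty}(\bm\theta_0)$ is an equilibrium only if $\bm\theta_0$ already is one.
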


For the algebraic decay
\begin{equation}
 K(t)=K_0(1+t/\tau)^{-\alpha},
 \label{eq:algebraic-protocol}
\end{equation}
one finds
\begin{equation}
 S_\infty=
 \begin{cases}
  K_0\tau/(\alpha-1),&\alpha>1,\\
  \infty,&\alpha\leq1.
 \end{cases}
 \label{eq:algebraic-exposure}
\end{equation}
Thus, at fixed $K_0$ and $\alpha>1$, increasing $\tau$ increases the total interaction budget linearly.  A collapse against $\lambda_2\tau$ in this one-parameter family can therefore be an accumulated-exposure collapse, because $\lambda_2S_\infty=K_0\lambda_2\tau/(\alpha-1)$.

\subsection{Laplacian modes and spectral exposure}

Let $L=D-A$ be the combinatorial graph Laplacian, with
$0=\lambda_1<\lambda_2\leq\cdots\leq\lambda_N$~\cite{Restrepo2005,DorflerBullo2014}.  Near a coherent state, write
$\bm\theta=\bar\theta\bm1+\bm\varphi$ with
$\bm1^\mathsf{T}\bm\varphi=0$.  Then
\begin{equation}
 \dot{\bm\varphi}=-K(t)L\bm\varphi
 +O(\|B^\mathsf{T}\bm\varphi\|^3),
 \label{eq:linearization}
\end{equation}
where $B$ is an oriented incidence matrix.  In the Laplacian eigenbasis,
\begin{equation}
 \dot c_\alpha=-K(t)\lambda_\alpha c_\alpha,
 \qquad
 c_\alpha(t)=c_\alpha(0)\e^{-\lambda_\alpha S(t)}.
 \label{eq:mode-solution}
\end{equation}
The natural variables are therefore the spectral exposures
$x_\alpha=\lambda_\alpha S$, not a comparison with $|\partial_t\ln K|$.

For the quadratic disagreement $E=\|\bm\varphi\|^2$, the linear result is
\begin{equation}
 E(t)=\sum_{\alpha=2}^{N}|c_\alpha(0)|^2
 \e^{-2\lambda_\alpha S(t)}.
 \label{eq:energy-spectrum}
\end{equation}
Only after faster modes have decayed does $\lambda_2$ alone control $E$.  This also shows why
\begin{equation}
 \frac{-\frac12\partial_t\ln E}{K(t)\lambda_2}
 \label{eq:ratio-warning}
\end{equation}
cannot diagnose deterministic nonadiabaticity: for a pure slow mode it equals one for every scalar protocol, while for a mode mixture it is a weighted spectral average greater than or equal to one within the linear regime.

\section{Clock-breaking perturbations and reciprocal schedules}
\label{sec:noise}

\subsection{Universal first-order response kernel}

Consider a time-independent additive perturbation that does not share the scalar multiplier,
\begin{equation}
 \dot{\bm\theta}_\varepsilon
 =K(t)\bm F(\bm\theta_\varepsilon)
 +\varepsilon\bm G(\bm\theta_\varepsilon).
 \label{eq:additive-perturbation}
\end{equation}
This includes weak frequency heterogeneity, constant pinning, and autonomous external drift.

\begin{theorem}[Reciprocal-schedule response]
\label{thm:reciprocal-kernel}
Let $\bm F$ and $\bm G$ be continuously differentiable and let $K(t)\geq K_{\min}>0$ on $[0,T]$.  Write $S_T=S(T)$, $\bm\vartheta(s)=\Phi_s(\bm\theta_0)$, and
\begin{equation}
 \bm\theta_\varepsilon(t(s))
 =\bm\vartheta(s)+\varepsilon\bm\psi(s)+O(\varepsilon^2).
 \label{eq:response-expansion}
\end{equation}
If $M(s,u)$ is the fundamental matrix generated by
$A(s)=D\bm F(\bm\vartheta(s))$, then
\begin{equation}
 \begin{aligned}
 \bm\psi(S_T)&=\int_0^{S_T}
 M(S_T,s)\bm G(\bm\vartheta(s))w(s)\dd s,\\
 w(s)&=\frac{1}{K(t(s))}.
 \end{aligned}
 \label{eq:reciprocal-kernel}
\end{equation}
At fixed physical duration,
\begin{equation}
 \int_0^{S_T}w(s)\dd s=T.
 \label{eq:reciprocal-mass}
\end{equation}
\end{theorem}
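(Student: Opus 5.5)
The plan is to move everything to the exposure clock, just as in the proof of Theorem~\ref{thm:clock}, and then linearize in $\varepsilon$ along the autonomous orbit. Since $K(t)\geq K_{\min}>0$ on $[0,T]$, the map $t\mapsto S(t)$ is absolutely continuous and strictly increasing. Its inverse $t(s)$ on $[0,S_T]$ is therefore absolutely continuous, with $\dd t/\dd s=1/K(t(s))=w(s)$ almost everywhere. Setting $\bm\Theta_\varepsilon(s):=\bm\theta_\varepsilon(t(s))$, the chain rule turns Eq.~\eqref{eq:additive-perturbation} into the nonautonomous system
\begin{equation*}
 \frac{\dd\bm\Theta_\varepsilon}{\dd s}=\bm F(\bm\Theta_\varepsilon)+\varepsilon\,w(s)\,\bm G(\bm\Theta_\varepsilon),
 \qquad \bm\Theta_\varepsilon(0)=\bm\theta_0 .
\end{equation*}
The whole protocol dependence now sits in the bounded, measurable coefficient $w(s)\leq 1/K_{\min}$ multiplying the perturbation. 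At $\varepsilon=0$ this reduces to Eq.~\eqref{eq:unit-flow}, whose solution is $\bm\vartheta(s)=\Phi_s(\bm\theta_0)$. This recovers Theorem~\ref{thm:clock}.

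The next step is the first-order expansion. Because $\bm F$ and $\bm G$ are $C^1$ and $w$ is bounded, the right-hand side is Carathéodory in $s$ and $C^1$ in $(\bm\Theta,\varepsilon)$. Standard differentiability of solutions with respect to parameters therefore gives that $\varepsilon\mapsto\bm\Theta_\varepsilon(s)$ is differentiable uniformly on $[0,S_T]$. The compactness of the torus, or of a neighbourhood of the orbit segment, lets us use uniform bounds on $D\bm F$ and $D\bm G$. The derivative $\bm\psi=\partial_\varepsilon\bm\Theta_\varepsilon|_{\varepsilon=0}$ satisfies the variational equation
\begin{equation*}
 \bm\psi'(s)=A(s)\bm\psi(s)+w(s)\bm G(\bm\vartheta(s)),\qquad \bm\psi(0)=0,
\end{equation*}
with $A(s)=D\bm F(\bm\vartheta(s))$ continuous in $s$. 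The $O(\varepsilon^2)$ remainder in Eq.~\eqref{eq:response-expansion} comes from a Grönwall estimate on $\bm\Theta_\varepsilon-\bm\vartheta-\varepsilon\bm\psi$, using the modulus of continuity of $D\bm F$ and $D\bm G$. Strictly, $C^1$ yields only $o(\varepsilon)$. An $O(\varepsilon^2)$ remainder needs Lipschitz derivatives, which holds for the smooth Kuramoto field and for the smooth perturbations considered. I would note this point explicitly.

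Duhamel's formula with the fundamental matrix $M(s,u)$, defined by $\partial_sM(s,u)=A(s)M(s,u)$ and $M(u,u)=I$, then gives Eq.~\eqref{eq:reciprocal-kernel} evaluated at $s=S_T$. The identity~\eqref{eq:reciprocal-mass} is the change of variables $\int_0^{S_T}w(s)\,\dd s=\int_0^{S_T}(\dd t/\dd s)\,\dd s=t(S_T)=T$. This is valid because $t(s)$ is absolutely continuous.

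The main obstacle is the regularity bookkeeping rather than the algebra. We must justify the chain rule and the variational equation when $K$ is only locally integrable, so that $w$ is merely measurable. The cleanest route is to work throughout with the integral (Carathéodory) forms of the equations. In that form, differentiability with respect to $\varepsilon$ follows from dominated convergence together with Grönwall's inequality, and no continuity of $K$ is needed. The lower bound $K_{\min}$ is essential here, since it keeps $w$ bounded and $t(s)$ well defined.
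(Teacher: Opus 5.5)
Your proposal is correct and follows the paper's proof step for step: you pass to the exposure clock to get $\dd\bm\theta_\varepsilon/\dd s=\bm F+\varepsilon w\bm G$, differentiate at $\varepsilon=0$ to obtain the variational equation with $\bm\psi(0)=0$, apply variation of constants, and read off the mass identity from $w\,\dd s=\dd t$. Your extra regularity bookkeeping is accurate and slightly sharper than the paper's statement without changing the argument: this covers the Lipschitz inverse $t(s)$, the Carath\'eodory integral form, and the observation that $C^1$ data give only an $o(\varepsilon)$ remainder unless $D\bm F$ is Lipschitz.
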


\begin{proof}
Changing variables in Eq.~\eqref{eq:additive-perturbation} gives
\begin{equation*}
 \frac{\dd\bm\theta_\varepsilon}{\dd s}
 =\bm F(\bm\theta_\varepsilon)
 +\varepsilon w(s)\bm G(\bm\theta_\varepsilon).
\end{equation*}
Differentiation with respect to $\varepsilon$ at zero yields
$\bm\psi'=A(s)\bm\psi+w(s)\bm G(\bm\vartheta(s))$ and
$\bm\psi(0)=0$.  Variation of constants gives
Eq.~\eqref{eq:reciprocal-kernel}.  Finally,
$w(s)\dd s=K(t(s))^{-1}K(t)\dd t=\dd t$, proving
Eq.~\eqref{eq:reciprocal-mass}.
\end{proof}

Thus, at first order, the autonomous orbit fixes a universal vector-valued response kernel and the schedule only redistributes a positive weight of fixed mass along it.  This statement alone does not order arbitrary vector responses, because the kernel may change direction or sign.  An ordering follows when the relevant scalar or matrix kernel is positive, as in the covariance problem below.

\subsection{Stochastic time change}

Consider additive independent noise in laboratory time~\cite{Sonnenschein2013},
\begin{equation}
 \dd\bm\theta
 =K(t)\bm F(\bm\theta)\dd t+\sqrt{2D}\,\dd\bm W_t.
 \label{eq:stochastic-model}
\end{equation}
On an interval where $K(t)>0$, invert $s=S(t)$ and write $t=t(s)$.

\begin{proposition}[Noise in the interaction clock]
\label{prop:stochastic-clock}
If $K(t)>0$ almost everywhere on the interval considered, then $S$ is strictly increasing and, under the stochastic time change $s=S(t)$, Eq.~\eqref{eq:stochastic-model} becomes
\begin{equation}
 \dd\bm\theta
 =\bm F(\bm\theta)\dd s
 +\sqrt{\frac{2D}{K(t(s))}}\,\dd\bm B_s,
 \label{eq:stochastic-timechange}
\end{equation}
where $\bm B_s$ is a standard Brownian motion in the $s$ clock.
\end{proposition}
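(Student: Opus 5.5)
The plan is to treat $s=S(t)$ as a deterministic, strictly increasing, absolutely continuous time change and verify the transformed SDE directly. I will not invoke any abstract random-time-change theorem. The argument has three parts: monotonicity of the clock, the drift term, and the martingale term.

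\textbf{Monotonicity and inverse.} Since $K>0$ almost everywhere and $K$ is locally integrable, for $t_1<t_2$ we have $S(t_2)-S(t_1)=\int_{t_1}^{t_2}K>0$, because a positive function with zero integral vanishes a.e. Hence $S$ is strictly increasing and continuous, with continuous inverse $t(s)$. Define $\tilde{\bm\theta}(s)=\bm\theta(t(s))$ and the filtration $\mathcal G_s=\mathcal F_{t(s)}$. This is a legitimate filtration because $t(s)$ is deterministic.

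\textbf{Drift term.} For $s_1<s_2$, write the integral form
\begin{equation*}
\tilde{\bm\theta}(s_2)-\tilde{\bm\theta}(s_1)=\int_{t(s_1)}^{t(s_2)}K(u)\bm F(\bm\theta(u))\dd u+\sqrt{2D}\bigl(\bm W_{t(s_2)}-\bm W_{t(s_1)}\bigr).
\end{equation*}
The drift integral becomes $\int_{s_1}^{s_2}\bm F(\tilde{\bm\theta}(\sigma))\dd\sigma$ by the change of variables $\sigma=S(u)$, $\dd\sigma=K(u)\dd u$. This substitution holds for absolutely continuous monotone $S$ and a continuous integrand, exactly as in the proof of Theorem~\ref{thm:clock}.

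\textbf{Martingale term.} This is the main point. Set $\bm M_s=\sqrt{2D}\,\bm W_{t(s)}$, a continuous Gaussian $\mathcal G_s$-martingale with independent increments and covariance $2D\,t(s)\,I$. Define
\begin{equation*}
\bm B_s=\int_0^s\sqrt{K(t(\sigma))/(2D)}\;\dd\bm M_\sigma .
\end{equation*}
Its quadratic variation is
\begin{equation*}
\int_0^s\frac{K(t(\sigma))}{2D}\,2D\,\dd t(\sigma)=\int_0^s K(t(\sigma))\,t'(\sigma)\dd\sigma=s,
\end{equation*}
using $t'(\sigma)=1/K(t(\sigma))$, which holds for a.e.\ $\sigma$. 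The step I expect to need care with is this a.e.\ derivative of the inverse. It follows because $t(\cdot)$ is absolutely continuous: $S$ maps null sets to null sets, and the set where $K=0$ has measure zero. The cross-variations between distinct components vanish, so L\'evy's characterization shows $\bm B$ is a standard Brownian motion in the $s$ clock. Inverting the stochastic integral gives
\begin{equation*}
\dd\bm M_s=\sqrt{2D/K(t(s))}\,\dd\bm B_s .
\end{equation*}
This requires $\int\bigl(2D/K(t(\sigma))\bigr)\dd\sigma=2D\,\dd t<\infty$ locally, which holds, while the forward integrand requires $\int K(t(\sigma))\,t'(\sigma)\dd\sigma=s<\infty$. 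Combining the drift and martingale parts yields Eq.~\eqref{eq:stochastic-timechange}.

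As a consistency check, the $s$-clock diffusion coefficient $2D\,w(s)$ integrates to $2D\,T$, matching Eq.~\eqref{eq:reciprocal-mass}.
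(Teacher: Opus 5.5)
Your argument is correct. It also supplies something the paper does not: the paper states this proposition without proof, implicitly treating it as the standard time-change formula for It\^o integrals under a deterministic clock.

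Quoting that theorem would be shorter. However, textbook versions usually assume a continuous, strictly positive rate, whereas here $K$ is only locally integrable and positive almost everywhere, so $w(s)=1/K(t(s))$ can be unbounded and is defined only for a.e.\ $s$. Your direct route handles this case. You construct the $s$-clock Brownian motion explicitly (it is simply $\bm B_s=\int_0^{t(s)}\sqrt{K(u)}\,\dd\bm W_u$) and identify it by L\'evy's characterization with respect to $\mathcal G_s=\mathcal F_{t(s)}$. You also check both integrability conditions needed to invert the stochastic integral; the second is exactly the mass identity~\eqref{eq:reciprocal-mass}, $\int_0^{S_T}w\,\dd s=T$.

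One small correction of attribution in your measure-theoretic step. Absolute continuity of $t(\cdot)$ comes from $K>0$ a.e.: if $S(A)$ is null then $\int_A K=|S(A)|=0$, which forces $|A|=0$. The property that $S$ maps null sets to null sets does a different job: it makes $K(t(s))$ well defined and positive, and $t'(s)=1/K(t(s))$ valid, for a.e.\ $s$. You can also avoid differentiating $t$ altogether. Pushing Lebesgue measure forward under $S$ gives $\int_0^s K(t(\sigma))\,\dd t(\sigma)=\int_0^{t(s)}K(u)\,\dd u=s$ directly.
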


Thus the deterministic orbit equivalence survives in the drift but fixed laboratory-time noise becomes a schedule-dependent effective temperature $D/K$.  Equal total exposure is no longer sufficient to determine the distribution of the final state.

Linearizing along $\bm\vartheta(s)$ gives
$\dd\bm\psi=A(s)\bm\psi\dd s+\sqrt{2Dw(s)}\dd\bm B_s$.
Consequently, its covariance at $S_T$ is
\begin{align}
 C(S_T)&=M(S_T,0)C(0)M(S_T,0)^\mathsf{T}\notag\\
 &\quad+2D\int_0^{S_T}w(s)M(S_T,s)M(S_T,s)^\mathsf{T}\dd s.
 \label{eq:general-covariance-kernel}
\end{align}
White noise therefore carries the same reciprocal weight in its covariance, although its pathwise amplitude transforms as $K^{-1/2}$.

\subsection{Exact covariance of a Laplacian mode}

In the linearized regime, each nonzero mode obeys
\begin{equation}
 \dd c_\alpha=-\lambda_\alpha K(t)c_\alpha\dd t
 +\sqrt{2D}\,\dd W_{\alpha,t}.
 \label{eq:ou-mode}
\end{equation}

\begin{proposition}[Mode mean and variance]
\label{prop:ou}
Let $m_\alpha(t)=\E[c_\alpha(t)]$ and
$v_\alpha(t)=\Var[c_\alpha(t)]$.  Then
\begin{align}
 m_\alpha(t)&=m_\alpha(0)\e^{-\lambda_\alpha S(t)},
 \label{eq:ou-mean}\\
 v_\alpha(t)&=v_\alpha(0)\e^{-2\lambda_\alpha S(t)}
 +2D\int_0^t\e^{-2\lambda_\alpha[S(t)-S(u)]}\dd u.
 \label{eq:ou-variance}
\end{align}
\end{proposition}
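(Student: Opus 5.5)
The plan is to solve the linear SDE in Eq.~\eqref{eq:ou-mode} with an integrating factor built from the exposure $S(t)$ of Eq.~\eqref{eq:exposure}, and then read off the first two moments. Assume $K$ is locally integrable, so $S$ is absolutely continuous with $\dot S=K$ almost everywhere. Define $Y_\alpha(t)=\e^{\lambda_\alpha S(t)}c_\alpha(t)$. Since $\e^{\lambda_\alpha S(t)}$ is a deterministic process of finite variation, the Itô product rule has no cross-variation term and gives
\begin{equation*}
 \dd Y_\alpha=\e^{\lambda_\alpha S(t)}\bigl(\lambda_\alpha K(t)c_\alpha\,\dd t+\dd c_\alpha\bigr)
 =\sqrt{2D}\,\e^{\lambda_\alpha S(t)}\dd W_{\alpha,t}.
\end{equation*}
Integrating from $0$ to $t$ and using $S(0)=0$ yields the explicit solution
\begin{equation*}
 c_\alpha(t)=\e^{-\lambda_\alpha S(t)}c_\alpha(0)
 +\sqrt{2D}\int_0^t\e^{-\lambda_\alpha[S(t)-S(u)]}\dd W_{\alpha,u}.
\end{equation*}
This is the stochastic analogue of Eq.~\eqref{eq:mode-solution}.

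Next I take moments, assuming, as the statement implicitly does, that the initial condition $c_\alpha(0)$ is independent of the driving Brownian motion and has finite second moment. The stochastic integral has a deterministic, locally square-integrable integrand, because $S$ is continuous on $[0,t]$. It is therefore a centred Gaussian variable, which gives Eq.~\eqref{eq:ou-mean} at once. For the variance, independence removes the cross term. The first term contributes $v_\alpha(0)\e^{-2\lambda_\alpha S(t)}$. By the Itô isometry the noise term contributes $2D\int_0^t\e^{-2\lambda_\alpha[S(t)-S(u)]}\dd u$, which is Eq.~\eqref{eq:ou-variance}.

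As a cross-check I will also derive the moment ODEs directly: $\dot m_\alpha=-\lambda_\alpha K m_\alpha$ and $\dot v_\alpha=-2\lambda_\alpha K v_\alpha+2D$. Solving these with the same integrating factor reproduces both formulas. It also agrees with the scalar case of Eq.~\eqref{eq:general-covariance-kernel}: the substitution $s=S(u)$ gives $w(s)\,\dd s=\dd u$, so the weight reappears as the laboratory-time measure.

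The only step needing care is justifying the product rule when $K$ is merely locally integrable, possibly signed or vanishing, rather than continuous. I would handle this by noting that $\e^{\lambda_\alpha S}$ is absolutely continuous, so the Itô formula for a product of a finite-variation process and a semimartingale applies. The alternative is to verify the explicit solution directly by substituting it back into the integral form of Eq.~\eqref{eq:ou-mode}, using stochastic Fubini. Unlike Proposition~\ref{prop:stochastic-clock}, this argument requires no positivity of $K$ and no inversion of $S$.
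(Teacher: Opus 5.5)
Your proposal is correct and follows the paper's proof. Multiplying by the integrating factor $\e^{\lambda_\alpha S(t)}$ gives the explicit It\^o solution, its expectation gives Eq.~\eqref{eq:ou-mean}, and the It\^o isometry gives Eq.~\eqref{eq:ou-variance}, using the independence of $c_\alpha(0)$ from the noise that you rightly make explicit.
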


\begin{proof}
Multiplication of Eq.~\eqref{eq:ou-mode} by the integrating factor
$\e^{\lambda_\alpha S(t)}$ gives the explicit It\^o solution.  Taking the expectation gives Eq.~\eqref{eq:ou-mean}; It\^o isometry gives Eq.~\eqref{eq:ou-variance}.
\end{proof}

The mean retains exact exposure equivalence.  The second term in
Eq.~\eqref{eq:ou-variance}, however, depends on when the coupling is applied.  At fixed terminal exposure $S_T$, a schedule with smaller partial exposure $S(u)$ has larger surviving exposure $S_T-S(u)$ and therefore suppresses noise injected at every time $u$ more strongly.  In particular, if
$S_{
m B}(u)\leq S_0(u)\leq S_{
m F}(u)$ pointwise and the three schedules share the same endpoint, then
$v_{\rm B}(T)\leq v_0(T)\leq v_{\rm F}(T)$.

For every nonnegative schedule with fixed $T$ and $S_T$, Eq.~\eqref{eq:ou-variance} also gives the sharp limiting bounds
\begin{equation}
 \bigl[v_\alpha(0)+2DT\bigr]\e^{-2\lambda_\alpha S_T}
 \leq v_\alpha(T)\leq
 v_\alpha(0)\e^{-2\lambda_\alpha S_T}+2DT.
 \label{eq:variance-bounds}
\end{equation}
The lower and upper limits are approached by concentrating the available coupling in a terminal or initial pulse, respectively.  Thus schedule dependence at fixed exposure can be parametrically large.

\begin{proposition}[Universal bounded optimum]
\label{prop:bounded-optimum}
Fix $T$, $S_T\leq K_{\max}T$, and the admissible class
$0\leq K(t)\leq K_{\max}$ with $\int_0^T K(t)\dd t=S_T$.  Every nonzero Laplacian-mode variance is minimized by the terminal bang--bang schedule
\begin{equation}
 K_*(t)=
 \begin{cases}
 0,&0\leq t<T-S_T/K_{\max},\\
 K_{\max},&T-S_T/K_{\max}<t\leq T.
 \end{cases}
 \label{eq:terminal-bangbang}
\end{equation}
The same schedule minimizes any nonnegative weighted sum of the mode variances.
\end{proposition}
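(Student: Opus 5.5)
The plan is to start from the exact variance formula of Proposition~\ref{prop:ou} and reduce the optimization to a pointwise comparison of partial exposures. At fixed $T$ and $S_T$, the first term $v_\alpha(0)\e^{-2\lambda_\alpha S_T}$ of Eq.~\eqref{eq:ou-variance} is the same for every admissible schedule. The schedule therefore enters only through
\begin{equation*}
 J_\alpha[K]=2D\int_0^T\e^{-2\lambda_\alpha[S_T-S(u)]}\dd u .
\end{equation*}
Because $\lambda_\alpha>0$, the integrand is strictly increasing in $S(u)$ at each $u$. It therefore suffices to show that the terminal bang--bang schedule $K_*$ minimizes the partial exposure $S(u)$ \emph{pointwise} among all admissible schedules. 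This is exactly the ordering argument stated after Proposition~\ref{prop:ou}, now with $S_{\rm B}=S_*$.

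The key step is the pointwise lower bound $S(u)\geq S_*(u)$ for every $u\in[0,T]$. Every admissible $K$ satisfies $S(u)\geq 0$. It also satisfies
\begin{equation*}
 S(u)=S_T-\int_u^TK(t)\dd t\geq S_T-K_{\max}(T-u),
\end{equation*}
using $K\leq K_{\max}$. Hence $S(u)\geq\max\{0,\,S_T-K_{\max}(T-u)\}$. Direct integration of Eq.~\eqref{eq:terminal-bangbang} shows that $S_*(u)$ equals this maximum: it is zero up to $t_*=T-S_T/K_{\max}$ and rises linearly with slope $K_{\max}$ afterwards. The hypothesis $S_T\leq K_{\max}T$ guarantees $t_*\geq0$, so $K_*$ is admissible.

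With this bound in hand, $J_\alpha[K]\geq J_\alpha[K_*]$ for every nonzero mode $\alpha$, and hence $v_\alpha(T)\geq v_\alpha^*(T)$. The same schedule wins for all $\alpha$ simultaneously, because the bound $S\geq S_*$ does not depend on $\lambda_\alpha$. Multiplying by nonnegative weights and summing then gives the weighted-sum claim at once. If strictness or uniqueness up to null sets is wanted, I would use continuity of $S$: if $S\neq S_*$ at some point, the strict monotonicity of the exponential gives a strictly larger integral. The statement only claims minimization, though, so this is optional.

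I expect the main obstacle to be interpretive rather than technical. The optimality is proved only within the linearized, decoupled OU description of Eq.~\eqref{eq:ou-mode}, so the universality claim should be stated for mode variances in that regime. The initial variance $v_\alpha(0)$ and the mean must also be checked to be schedule-independent at fixed $S_T$; Proposition~\ref{prop:ou} gives this directly. Beyond that, the only care needed is with measurability and absolute continuity of $S$, so that the bound $\int_u^TK\leq K_{\max}(T-u)$ holds for merely integrable $K$; this is immediate.
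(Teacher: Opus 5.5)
Your proposal is correct and follows essentially the same route as the paper. Both arguments use the pointwise envelope $S(u)\geq\max\{0,S_T-K_{\max}(T-u)\}$, which $K_*$ attains, together with the monotonicity in $S(u)$ of the noise kernel in Eq.~\eqref{eq:ou-variance} at fixed $S_T$; this gives simultaneous minimization of every mode variance, and hence of any nonnegative weighted sum.
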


\begin{proof}
Every admissible schedule must satisfy
\begin{equation*}
 S(u)\geq\max\{0,S_T-K_{\max}(T-u)\}=:S_*(u),
\end{equation*}
because at most $K_{\max}(T-u)$ exposure can be supplied after $u$.
The schedule in Eq.~\eqref{eq:terminal-bangbang} attains this lower envelope.  At fixed $S_T$, the noise kernel in Eq.~\eqref{eq:ou-variance} is pointwise increasing in $S(u)$, so $S_*$ minimizes every $v_\alpha(T)$ simultaneously.
\end{proof}

Without an amplitude bound, no ordinary schedule attains the lower limit in Eq.~\eqref{eq:variance-bounds}; it is the infimum approached by a terminal pulse.  Alternative resource constraints, such as fixed $\int K^2\dd t$, allow $S_T$ to vary and lead to a genuinely mode- and graph-dependent optimal-control problem, analogous in structure but not identical in cost to finite-time optimization in stochastic thermodynamics~\cite{SchmiedlSeifert2007}.

At fixed $K$, the stationary variance is $D/(K\lambda_\alpha)$.  If $K$ changes slowly, one may compare the relaxation rate $2K\lambda_\alpha$ with $|\partial_t\ln K|$ to study tracking of this moving stochastic covariance.  Such a criterion concerns the noisy invariant distribution, not deterministic tracking of an equilibrium whose location is independent of $K$.

\subsection{Beyond reciprocal weighting}

The reciprocal kernel covers autonomous additive terms such as
\begin{align}
 \dot\theta_i&=\omega_i+K(t)F_i(\bm\theta),
 \label{eq:heterogeneous}\\
 \dot{\bm\theta}&=K(t)\bm F(\bm\theta)+\varepsilon\bm G(\bm\theta).
 \label{eq:additive-class}
\end{align}
At first order, their schedule dependence is completely described by Eq.~\eqref{eq:reciprocal-kernel}.  Structurally different examples include
\begin{align}
 m\ddot\theta_i+\gamma\dot\theta_i&=K(t)F_i(\bm\theta),
 \label{eq:inertial}\\
 \dot{\bm\theta}&=-K(t)L(t)\bm\theta,
 \label{eq:changing-graph}
\end{align}
as well as explicitly timed forcing and delays.  Inertia generates additional derivative terms under a time change, while noncommuting graph generators cannot be reduced to one accumulated scalar.  These are appropriate settings for genuine rate-induced transitions or tracking failure~\cite{Ashwin2012}.

\section{Operational phase estimability}
\label{sec:phase}

The phase $\Psi=\arg Z$ is a circular quantity.  An ordinary ensemble variance of $\Psi$ is neither gauge invariant across independently rotated realizations nor well behaved near the branch cut.  Operational estimability must specify what is observed, what is held fixed, and which circular error is used~\cite{MardiaJupp2000}.

\subsection{A conditional observation model}

Fix a microscopic configuration $\bm\theta$ with $R>0$ and suppose its phases are observed with independent small errors,
\begin{equation}
 \widetilde\theta_j=\theta_j+\epsilon_j,
 \qquad
 \epsilon_j=\sigma\xi_j,
 \qquad
 \E\xi_j=0,
 \qquad
 \E\xi_i\xi_j=\delta_{ij}.
 \label{eq:observation-model}
\end{equation}
The $\xi_j$ are taken i.i.d., symmetric, and with finite sixth moment.  These assumptions include the Gaussian errors used below; symmetry excludes a common directional bias from the observation channel.
Let $\widehat\Psi=\arg[N^{-1}\sum_j\e^{\ii\widetilde\theta_j}]$ and define the local circular error
$\delta\Psi=\wrap(\widehat\Psi-\Psi)$.

\begin{proposition}[Conditional phase error]
\label{prop:phase-error}
Put $\phi_j=\theta_j-\Psi$, $c_j=\cos\phi_j$, and $s_j=\sin\phi_j$.  In the joint small-error regime $\sigma\ll1$ and $\sigma/(R\sqrt N)\ll1$,
\begin{align}
 \delta\Psi
 &=\frac{1}{NR}\sum_jc_j\epsilon_j
 -\frac{1}{2NR}\sum_js_j(\epsilon_j^2-\sigma^2)\notag\\
 &\quad+\frac{1}{N^2R^2}
 \left(\sum_jc_j\epsilon_j\right)
 \left(\sum_ks_k\epsilon_k\right)+\mathcal R_3,
 \label{eq:phase-delta}\\
 \mathcal R_3
 &=O_{\mathbb P}\!\left(
 \frac{\sigma^3}{R\sqrt N}
 +\frac{\sigma^3}{NR^2}
 +\frac{\sigma^3}{N^{3/2}R^3}\right),
 \label{eq:phase-remainder}\\
 \E[(\delta\Psi)^2\mid\bm\theta]
 &=\frac{\sigma^2}{NR^2}q_2(\bm\theta)
 +O\!\left(\frac{\sigma^4}{NR^2}
 +\frac{\sigma^4}{N^2R^4}\right),
 \label{eq:phase-mse}\\
 q_2(\bm\theta)&:=\frac1N\sum_{j=1}^{N}
 c_j^2.
 \label{eq:q2}
\end{align}
Moreover, with $Z_2=N^{-1}\sum_j\e^{2\ii\theta_j}$, the estimator has the second-order conditional bias
\begin{equation}
 \E[\delta\Psi\mid\bm\theta]
 =\frac{\sigma^2}{2NR^2}
 \operatorname{Im}\!\left(\e^{-2\ii\Psi}Z_2\right)
 +O\!\left(\frac{\sigma^4}{NR^2}
 +\frac{\sigma^4}{N^2R^4}\right).
 \label{eq:phase-bias}
\end{equation}
\end{proposition}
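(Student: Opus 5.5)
Rotate coordinates so that $\Psi=0$. Then $Z=R$ is real and
\begin{equation*}
 \widehat Z\e^{-\ii\Psi}=\frac1N\sum_j\e^{\ii(\phi_j+\epsilon_j)}.
\end{equation*}
Expand each term as $\e^{\ii\epsilon_j}=1+\ii\epsilon_j-\epsilon_j^2/2+r_j$ with $|r_j|\le|\epsilon_j|^3/6$. This gives
\begin{equation*}
 \widehat Z\e^{-\ii\Psi}=R+X+\ii Y,
\end{equation*}
where the real and imaginary increments are
\begin{align*}
 X&=-\frac1N\sum_j s_j\epsilon_j-\frac1{2N}\sum_j c_j\epsilon_j^2+\dots,\\
 Y&=\frac1N\sum_j c_j\epsilon_j-\frac1{2N}\sum_j s_j\epsilon_j^2+\dots .
\end{align*}
Here we use $\sum_j s_j=0$, which is the definition of $\Psi$. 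The circular error is then $\delta\Psi=\arctan\bigl(Y/(R+X)\bigr)$. It is well defined because, on the small-error event, $|X|,|Y|\ll R$. This holds with high probability since $Y=O_{\mathbb P}(\sigma/\sqrt N+\sigma^2)$. Compared with $R$, we need $\sigma/(R\sqrt N)\ll1$ together with $\sigma^2\ll R$; I would note that the latter is implied in the regime considered, or simply add it.

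Expand $\arctan(Y/(R+X))=Y/R-XY/R^2+O(|Y|^3/R^3+|Y|X^2/R^3)$ and keep terms through second order in $\epsilon$. The linear term is $N^{-1}R^{-1}\sum_jc_j\epsilon_j$. The quadratic terms are $-(2NR)^{-1}\sum_js_j\epsilon_j^2$ and the cross term $+(NR)^{-2}(\sum c_j\epsilon_j)(\sum s_k\epsilon_k)$, whose sign comes from $-XY$ with $X\ni-\sum s\epsilon/N$. Since $\sum_js_j=0$, we may replace $\epsilon_j^2$ by $\epsilon_j^2-\sigma^2$, which is exactly Eq.~\eqref{eq:phase-delta}.

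The remainder $\mathcal R_3$ has three sources, one for each term in Eq.~\eqref{eq:phase-remainder}. The first is the cubic Taylor term and the $\epsilon^2$-corrections inside $Y$ and $X$ of size $\sigma^3/(R\sqrt N)$. Here I would use that $N^{-1}\sum_j(\text{bounded})\,\epsilon_j^3$ has mean zero by symmetry, so it is $O_{\mathbb P}(\sigma^3/\sqrt N)$. A deterministic $\sigma^3$ term can only appear with an even power, which gives $\sigma^4$. The second source is the products of $O(\sigma/\sqrt N)$ and $O(\sigma^2)$ pieces divided by $R^2$. The third is the genuine cubic terms in the linear fluctuations, $(\sigma/\sqrt N)^3/R^3$. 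Bookkeeping which products are centered is routine but fiddly.

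For the moments, condition on $\bm\theta$ and take expectations term by term, using independence, $\E\xi=0$, $\E\xi^2=1$, $\E\xi^3=0$ by symmetry, and finite sixth moments.

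\emph{Mean square error.} The square of the linear term gives $\sigma^2\sum_jc_j^2/(N^2R^2)=\sigma^2q_2/(NR^2)$. Cross terms between the linear and quadratic parts involve odd moments and vanish. The squares of the quadratic parts are $O(\sigma^4/(NR^2))$ and $O(\sigma^4/(N^2R^4))$.

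\emph{Bias.} The centered quadratic term has mean zero. The cross term has mean $\sigma^2\sum_jc_js_j/(N^2R^2)$. Since
\begin{equation*}
 \frac1N\sum_j 2c_js_j=\frac1N\sum_j\sin2\phi_j=\operatorname{Im}\bigl(\e^{-2\ii\Psi}Z_2\bigr),
\end{equation*}
this is $\frac{\sigma^2}{2NR^2}\operatorname{Im}(\e^{-2\ii\Psi}Z_2)$, matching Eq.~\eqref{eq:phase-bias}. Next-order contributions are of order four in $\epsilon$ and match the stated error terms.

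\textbf{Main obstacle.} The expectations need to be controlled on the bad event where $R+X$ is small or the arctan expansion fails. There $\delta\Psi$ is bounded by $\pi$, and I would bound the probability of that event by a Markov or Chebyshev inequality using sixth moments. I expect this to show the event contributes at most within the stated $O(\cdot)$, possibly requiring a mild strengthening of the regime assumptions.
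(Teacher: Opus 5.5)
Your route is the paper's: rotate by $\Psi$, write $\widetilde Z\e^{-\ii\Psi}=R+X+\ii Y$, expand $\arctan[Y/(R+X)]=Y/R-XY/R^2+\dots$, and take conditional moments by independence and symmetry. The bias then comes from $\sigma^2\sum_jc_js_j/(N^2R^2)$ and the second-harmonic identity.

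There is one slip to fix. The fix is exactly the remark the paper makes at the start of its proof: $\sum_js_j=0$ and $\sum_jc_j=NR$.
\begin{itemize}
\item $Y$ has no uncentered $O(\sigma^2)$ part. Indeed $\E[Y\mid\bm\theta]=\E[\cos\epsilon_1]\,N^{-1}\sum_js_j=0$ exactly, so $Y=O_{\mathbb P}(\sigma/\sqrt N)$.
\item The only uncentered piece of $X$ is $R(\E\cos\epsilon_1-1)\approx-\sigma^2R/2$, which carries a factor $R$.
\end{itemize}
Hence $|X|,|Y|\ll R$ follows from $\sigma\ll1$ and $\sigma/(R\sqrt N)\ll1$ alone. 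The extra condition $\sigma^2\ll R$ is unnecessary. Your claim that it is implied by the regime is also false: for large $N$ one can have $\sigma/(R\sqrt N)\ll1$ while $R\ll\sigma^2$.

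The same split corrects your second remainder source. As you describe it, an $O(\sigma/\sqrt N)$ factor times an $O(\sigma^2)$ factor over $R^2$ gives $\sigma^3/(\sqrt N R^2)$. That is not dominated by the terms in Eq.~\eqref{eq:phase-remainder}. What actually happens is this:
\begin{itemize}
\item $-\sigma^2R/2$ times $Y/R^2$ feeds the $\sigma^3/(R\sqrt N)$ term;
\item the centered quadratic fluctuations, of size $O_{\mathbb P}(\sigma^2/\sqrt N)$, times the linear pieces over $R^2$ give $\sigma^3/(NR^2)$.
\end{itemize}

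Your bad-event truncation is more careful than the paper, which passes from the $O_{\mathbb P}$ expansion to conditional moments without comment. It also goes through without strengthening the regime. $Y$ and $X-\E X$ are averages of independent centered variables with sixth moments $O(\sigma^6)$. Rosenthal's inequality therefore gives $\E[Y^6],\ \E[(X-\E X)^6]\lesssim\sigma^6/N^3$. Markov's inequality then bounds the probability of the bad event by $O\bigl((\sigma^2/(NR^2))^3\bigr)$. This is below the stated $O(\sigma^4/(N^2R^4))$ error. So are the Cauchy--Schwarz contributions of the polynomial terms restricted to that event, so the stated moment errors hold as written.
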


\begin{proof}
Rotate by $\Psi$ and write the perturbed order parameter as
$\widetilde Z\e^{-\ii\Psi}=R+x+\ii y$.  Since
$\sum_js_j=0$ and $\sum_jc_j=NR$, expansion of
$\sin(\phi_j+\epsilon_j)$ and $\cos(\phi_j+\epsilon_j)$ gives
\begin{align*}
 y&=\frac1N\sum_jc_j\epsilon_j
 -\frac1{2N}\sum_js_j(\epsilon_j^2-\sigma^2)+O_{\mathbb P}(\sigma^3/\sqrt N),\\
 x&=-\frac1N\sum_js_j\epsilon_j-\frac{\sigma^2R}{2}
 +O_{\mathbb P}(\sigma^2/\sqrt N).
\end{align*}
Using $\delta\Psi=\arctan[y/(R+x)]=y/R-xy/R^2+O_{\mathbb P}((|x|+|y|)^3/R^3)$ yields Eqs.~\eqref{eq:phase-delta} and~\eqref{eq:phase-remainder}.  Independence and symmetry eliminate odd moments in the mean-squared error and give Eq.~\eqref{eq:phase-mse}.  The only quadratic contribution to the mean is
$N^{-2}R^{-2}\sum_jc_js_j\sigma^2$; since
$N^{-1}\sum_jc_js_j=\tfrac12\operatorname{Im}(\e^{-2\ii\Psi}Z_2)$, Eq.~\eqref{eq:phase-bias} follows.
\end{proof}

The prefactor and bias are governed by complementary components of the same second harmonic.  Indeed,
\begin{equation}
 q_2=\frac12\left[1+\operatorname{Re}\!\left(\e^{-2\ii\Psi}Z_2\right)\right]
 \geq R^2,
 \label{eq:q2-second-harmonic}
\end{equation}
where the inequality follows from
$N^{-1}\sum_jc_j=R$ and Cauchy--Schwarz.

The familiar $1/(NR^2)$ dependence is therefore a conditional small-error law, not a universal variance of a circular random phase.  Its use requires both small individual angular errors, $\sigma\ll1$, and a small collective perturbation relative to the order parameter,
$\sigma^2\ll NR^2$.  The order-one prefactor $q_2$ depends on the microscopic configuration and the threshold depends on the observation noise and required accuracy.  For a target root-mean-square error $\delta$, a defensible operational condition is
\begin{equation}
 NR^2\gtrsim\frac{\sigma^2q_2}{\delta^2},
 \label{eq:operational-threshold}
\end{equation}
rather than a protocol-independent constant $NR^2>\kappa$.

\subsection{Information retained by the order parameter}

The divergence in Eq.~\eqref{eq:phase-mse} is not necessarily a loss of information in the full phase sample.  To make that distinction identifiable, suppose a reference configuration $\bm\theta^0$ is known up to a rigid rotation and the localized observations are Gaussian,
\begin{equation}
 \widetilde\theta_j=\theta_j^0+\Psi+\epsilon_j,
 \qquad \epsilon_j\sim\mathcal N(0,\sigma^2).
 \label{eq:known-template-model}
\end{equation}

\begin{proposition}[Information and compression efficiency]
\label{prop:fisher-efficiency}
In the known-template model, the Fisher information and Cram\'er--Rao bound are~\cite{Rao1945}
\begin{equation}
 \mathcal I(\Psi)=\frac{N}{\sigma^2},
 \qquad
 \Var(\widehat\Psi)\geq\frac{\sigma^2}{N}.
 \label{eq:fisher-cramer-rao}
\end{equation}
The localized circular estimator
\begin{equation}
 \widehat\Psi_{\rm full}
 =\arg\!\left[\frac1N\sum_j
 \e^{\ii(\widetilde\theta_j-\theta_j^0)}\right]
 \label{eq:template-estimator}
\end{equation}
attains this bound to leading order.  By contrast, compression to the ordinary order parameter has asymptotic efficiency
\begin{equation}
 \eta_{\arg Z}
 :=\frac{\sigma^2/N}{\sigma^2q_2/(NR^2)}
 =\frac{R^2}{q_2}\leq1.
 \label{eq:order-parameter-efficiency}
\end{equation}
\end{proposition}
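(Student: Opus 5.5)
The plan has three parts: compute the Fisher information directly from the Gaussian likelihood, redo the delta-method expansion of Proposition~\ref{prop:phase-error} for the template-centred sample, and then take the ratio of the leading variances.

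\emph{Fisher information.} In the known-template model \eqref{eq:known-template-model} the residuals $r_j=\widetilde\theta_j-\theta_j^0-\Psi$ are i.i.d.\ $\mathcal N(0,\sigma^2)$, with $\Psi$ entering only as a location shift. The score is therefore $\partial_\Psi\log p=\sigma^{-2}\sum_j r_j$. Its variance is $N/\sigma^2$, which gives $\mathcal I(\Psi)=N/\sigma^2$. The Cram\'er--Rao inequality for unbiased, or locally unbiased, estimators then gives $\Var(\widehat\Psi)\geq\sigma^2/N$. I will note that this bound does not depend on the template, and in particular does not depend on $R$.

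\emph{Attainment by the template estimator.} The centred phases $\widetilde\theta_j-\theta_j^0=\Psi+\epsilon_j$ form a configuration that, in the absence of noise, is perfectly synchronized at angle $\Psi$. Applying Proposition~\ref{prop:phase-error} to this reference configuration means setting $\phi_j=0$, $c_j=1$, $s_j=0$, $R=1$ and $q_2=1$. Equation~\eqref{eq:phase-mse} then gives a mean-squared error of $\sigma^2/N+O(\sigma^4/N)$. The bias \eqref{eq:phase-bias} vanishes, because $\operatorname{Im}(\e^{-2\ii\Psi}Z_2)=0$ when all $s_j=0$. Hence $\widehat\Psi_{\rm full}$ attains the bound to leading order in $\sigma$.

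\emph{Efficiency of $\arg Z$.} In this model $\arg Z$ is computed from the raw $\widetilde\theta_j$ and has reference configuration $\theta_j^0+\Psi$. The template's own order parameter supplies $R$ and $q_2$. Proposition~\ref{prop:phase-error} gives the leading mean-squared error $\sigma^2q_2/(NR^2)$ around the template's phase. The estimator is biased as an estimator of $\Psi$ itself, since $\arg Z$ equals $\Psi$ plus the fixed template phase. This offset is known and can be subtracted, so the comparison concerns only the fluctuation term. Dividing the Cram\'er--Rao variance by this mean-squared error gives $R^2/q_2$. The bound $\eta_{\arg Z}\leq1$ is exactly \eqref{eq:q2-second-harmonic}.

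The main obstacle is making precise what ``asymptotic efficiency'' means here. The statement compares leading-order mean-squared errors in the joint small-noise regime $\sigma\ll1$, $\sigma^2\ll NR^2$ of Proposition~\ref{prop:phase-error}. It does not assert exact finite-sample variances. The second-order bias terms are of relative order $\sigma^2$, so they must be shown not to affect the leading ratio.
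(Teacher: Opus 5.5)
Your proposal is correct and follows the route the paper relies on; the paper states this proposition without a separate proof. The Gaussian location-model score gives $\mathcal I(\Psi)=N/\sigma^2$, Proposition~\ref{prop:phase-error} applied to the template-centred configuration with $c_j=1$, $s_j=0$, $R=q_2=1$ gives attainment, and the ratio of leading mean-squared errors together with Eq.~\eqref{eq:q2-second-harmonic} gives $\eta_{\arg Z}=R^2/q_2\leq1$. The only imprecision is in your closing remark: Eq.~\eqref{eq:phase-mse} is already a full conditional mean-squared error, so the squared bias lies inside its remainder, whose size relative to the leading term is $O(\sigma^2/q_2+\sigma^2/(NR^2q_2))$ rather than simply $O(\sigma^2)$; this vanishes as $\sigma\to0$ at fixed configuration, which is all the asymptotic ratio requires.
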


Thus the $R^{-2}$ amplification quantifies information discarded by the statistic $Z$, not information necessarily absent from the observations.  This conclusion depends essentially on knowing the microscopic template up to rotation.  If every $\theta_j^0$ is an unrestricted nuisance parameter, the separation of template and global rotation is gauge degenerate and $\Psi$ is not identifiable without additional structure.

\subsection{Dynamical diffusion of the neutral phase}

For Eq.~\eqref{eq:stochastic-model} on a symmetric graph, consider a continuous lift of the phases and define
$\bar\theta=N^{-1}\sum_i\theta_i$.  Pairwise coupling cancels exactly in the sum.

\begin{proposition}[Collective zero-mode diffusion]
\label{prop:zero-mode}
For symmetric $A$,
\begin{equation}
 \dd\bar\theta=\sqrt{\frac{2D}{N}}\dd W_{0,t},
 \qquad
 \Var[\bar\theta(t+\Delta t)-\bar\theta(t)]
 =\frac{2D\Delta t}{N}.
 \label{eq:zero-mode-diffusion}
\end{equation}
Writing $\theta_j=\bar\theta+\varphi_j$ with
$\sum_j\varphi_j=0$ and $\eta=\max_j|\varphi_j|\ll1$, direct expansion gives
\begin{equation}
 \e^{-\ii\bar\theta}Z
 =1-\frac{1}{2N}\sum_j\varphi_j^2
 -\frac{\ii}{6N}\sum_j\varphi_j^3+O(\eta^4),
 \label{eq:phase-mean-expansion}
\end{equation}
and hence
$\Psi=\bar\theta-(6N)^{-1}\sum_j\varphi_j^3+O(\eta^5)$.
Thus $\Psi=\bar\theta+O(\eta^3)$, so the global order-parameter phase has the same leading diffusion law.
\end{proposition}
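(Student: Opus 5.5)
The plan is to first average the stochastic equation~\eqref{eq:stochastic-model} over oscillators to get the exact zero-mode equation, and then relate $\bar\theta$ to $\Psi$ by expanding $Z$ near the coherent state.

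\textbf{Zero mode.} Summing the drift over $i$ gives
$K(t)\sum_{i,j}A_{ij}\sin(\theta_j-\theta_i)$. This vanishes identically: under the exchange $i\leftrightarrow j$ the summand changes sign, because $A$ is symmetric and $\sin$ is odd. Hence
$\dd\bar\theta=N^{-1}\sqrt{2D}\sum_i\dd W_{i,t}$.
Since the $W_i$ are independent standard Brownian motions, the process
$W_{0,t}:=N^{-1/2}\sum_iW_{i,t}$
is a continuous martingale with quadratic variation $t$. By L\'evy's characterization it is a standard Brownian motion. This gives the SDE in Eq.~\eqref{eq:zero-mode-diffusion}, and the increment variance $2D\Delta t/N$ follows at once. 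The result holds for any schedule $K(t)$, since the coupling has dropped out completely. A continuous lift is needed so that $\bar\theta$ is a well-defined real process rather than an angle.

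\textbf{Expansion of $Z$.} Write $\e^{-\ii\bar\theta}Z=N^{-1}\sum_j\e^{\ii\varphi_j}$ and Taylor-expand the exponential:
\[
\e^{\ii\varphi}=1+\ii\varphi-\tfrac12\varphi^2-\tfrac{\ii}{6}\varphi^3+O(\varphi^4).
\]
The linear term vanishes because $\sum_j\varphi_j=0$, which gives Eq.~\eqref{eq:phase-mean-expansion}.

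\textbf{From $Z$ to $\Psi$.} We have $\Psi-\bar\theta=\arg(a+\ii b)$ with
\[
a=1-\tfrac{1}{2N}\sum_j\varphi_j^2+O(\eta^4),\qquad
b=-\tfrac{1}{6N}\sum_j\varphi_j^3+O(\eta^5).
\]
The point to be careful about is the size of the imaginary remainder. It collects only the odd Taylor terms, so the next contribution is $\tfrac{1}{120N}\sum_j\varphi_j^5=O(\eta^5)$. The even $O(\eta^4)$ remainder lies entirely in the real part.

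Because $a=1+O(\eta^2)$,
\[
\arctan(b/a)=b\,(1+O(\eta^2))+O(b^3).
\]
Here $b\,O(\eta^2)=O(\eta^5)$, and $b^3=O(\eta^9)$ is negligible. This yields
$\Psi=\bar\theta-(6N)^{-1}\sum_j\varphi_j^3+O(\eta^5)$,
and in particular $\Psi=\bar\theta+O(\eta^3)$.

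\textbf{Diffusion of $\Psi$.} In the coherent regime, where $\eta$ stays small, $\Psi$ differs from $\bar\theta$ by a bounded small correction. It therefore inherits the leading diffusive growth: its increments over long windows have variance $2D\Delta t/N$ up to $O(\eta^3)$ corrections, which do not accumulate.
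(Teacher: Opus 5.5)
Your proposal is correct and follows the same route as the paper. The paper shows that the pairwise coupling cancels in the sum by antisymmetry, then Taylor-expands $\e^{-\ii\bar\theta}Z$ and its argument. Your parity observation, that the imaginary remainder is $O(\eta^5)$ while the $O(\eta^4)$ remainder sits in the real part, is needed to justify the stated $O(\eta^5)$ error in $\Psi$; the paper leaves this step implicit, and the displayed $O(\eta^4)$ expansion alone would not give it.
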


This neutral Brownian motion should not be confused with the ill-conditioning of $\arg Z$ as $R\to0$.  The first is physical diffusion of a symmetry mode; the second is loss of estimability of a polar angle near the origin.

\section{Periodic rings and winding sectors}
\label{sec:ring}

Clock equivalence does not imply global synchronization.  The autonomous flow may have multiple attractors, and a scalar protocol preserves their orbit structure.  A nearest-neighbor ring is the simplest example~\cite{Wiley2006,Delabays2017,Zhang2021,Mihara2022,Ferguson2018,MedvedevTang2015}.

For periodic indexing, define
\begin{equation}
 W(\bm\theta)=\frac{1}{2\pi}\sum_{j=1}^{N}
 \wrap(\theta_{j+1}-\theta_j)\in\mathbb Z.
 \label{eq:winding}
\end{equation}
The $q$-twisted configurations
\begin{equation}
 \theta_j^{(q)}=\phi+\frac{2\pi qj}{N}
 \label{eq:twisted-state}
\end{equation}
are equilibria.  Their linearization is
$-\cos(2\pi q/N)L_{\rm ring}$, hence they are linearly stable modulo the neutral global-phase direction when
\begin{equation}
 |q|<N/4.
 \label{eq:twisted-stability}
\end{equation}
For $q\not\equiv0\pmod N$,
\begin{equation}
 R_q=0,
 \qquad
 C_{b,q}:=\frac1N\sum_j\cos(\theta_{j+1}-\theta_j)
 =\cos(2\pi q/N).
 \label{eq:local-v-global}
\end{equation}
Thus a nonzero-winding state can be almost perfectly ordered locally while lacking a nonzero global order parameter.  This is not a failure of the Laplacian relaxation rate within a sector; it is multistability between sectors with qualitatively different global observables.

For a discrete ring, $W$ can change when an edge increment crosses the branch cut.  Once the flow enters a phase-cohesive region, changes of $W$ cease and the system relaxes toward the corresponding twisted state.  Additive noise restores rare phase slips and, through Eq.~\eqref{eq:stochastic-timechange}, makes their statistics schedule dependent.  In a quasistatic weak-noise approximation, a barrier $\Delta U$ gives the Kramers rate per unit exposure
\begin{align}
 r_s(K)&\asymp A(K)\exp[-K\Delta U/D],\notag\\
 \E N_{\rm slip}&\simeq\int_0^{S_T}r_s(K(t(s)))\dd s,
 \label{eq:kramers-schedule}
\end{align}
which depends on the schedule $K(t(s))$, not only on $S_T$.

\section{Numerical experiments}
\label{sec:numerics}

The simulations are designed to test distinctions established analytically rather than to infer them from an unconstrained parameter collapse.  All random seeds, graph constructors, integrators, and plotting commands are contained in the accompanying script.

\subsection{Matched-exposure protocols}

We use three nonnegative protocols on $0\leq t\leq T$ with the same total exposure $S_T$:
\begin{align}
 K_0(t)&=\bar K,
 \label{eq:kconstant}\\
 K_{\rm F}(t)&=\bar K[1+a\cos(\pi t/T)],
 \label{eq:kfront}\\
 K_{\rm B}(t)&=\bar K[1-a\cos(\pi t/T)],
 \label{eq:kback}
\end{align}
where $\bar K=S_T/T$ and $a=0.9$.  The front-loaded protocol applies most coupling early; the back-loaded protocol applies it late.

For the deterministic test we use one connected Erd\H{o}s--R\'enyi graph with $N=60$, $p=0.10$, i.i.d. uniform initial phases, $T=12$, and $S_T=8$.  The ODE is integrated with DOP853 at relative tolerance $2\times10^{-11}$.  Figure~\ref{fig:clock} shows that $R(t)$ differs strongly in physical time while the curves coincide when parameterized by $S(t)$.  At $T$, all protocols give
$R=0.9975427104$; their phase configurations agree with the constant protocol to rms discrepancies $1.6\times10^{-7}$ and $7.4\times10^{-7}$ for the front- and back-loaded cases, respectively.

\begin{figure}[t]
 \includegraphics[width=\columnwidth]{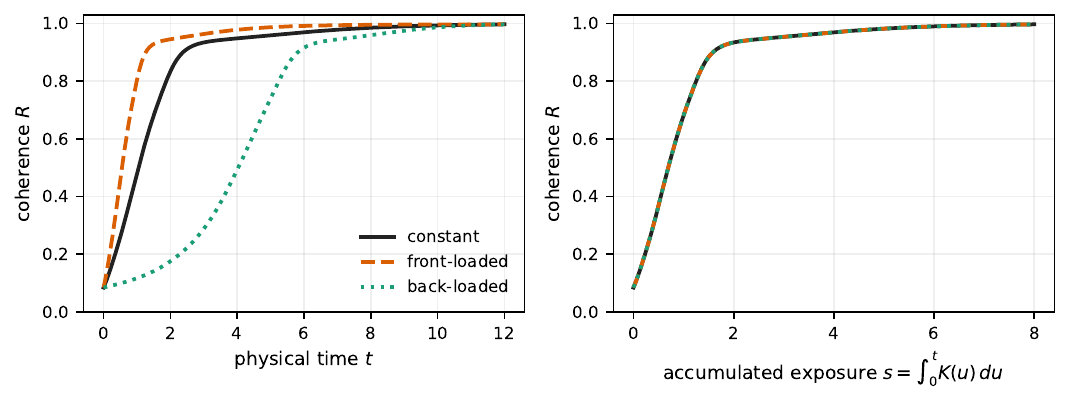}
 \caption{\label{fig:clock}
 Exact clock equivalence.  The three protocols in
 Eqs.~\eqref{eq:kconstant}--\eqref{eq:kback} have the same exposure.  They generate different time traces (left) but the same deterministic orbit when plotted against $S(t)$ (right).}
\end{figure}

\subsection{Reciprocal-kernel response}

We test Theorem~\ref{thm:reciprocal-kernel} with the additive perturbation $\varepsilon\bm G=\varepsilon\bm\omega$, where $\bm\omega$ is one fixed zero-mean, unit-variance frequency vector.  A connected ER graph with $N=28$, $T=6$, and $S_T=4$ is evolved with the three matched-exposure protocols.  The unperturbed orbit is integrated in $s$, and the variational equation in Theorem~\ref{thm:reciprocal-kernel} is solved independently for each reciprocal weight.  We then compare it with finite differences of the full nonlinear heterogeneous system for $3\times10^{-4}\leq\varepsilon\leq10^{-2}$.

Figure~\ref{fig:kernel} shows two distinct facts.  The reciprocal weights have different shapes but equal numerical mass $6.0000002$ or better, as required by Eq.~\eqref{eq:reciprocal-mass}.  Their response gains differ by more than a factor of five, and the nonlinear finite differences converge to the schedule-specific kernel predictions as $\varepsilon\to0$; at the smallest amplitude every relative vector error is below one percent.

\begin{figure}[t]
 \includegraphics[width=\columnwidth]{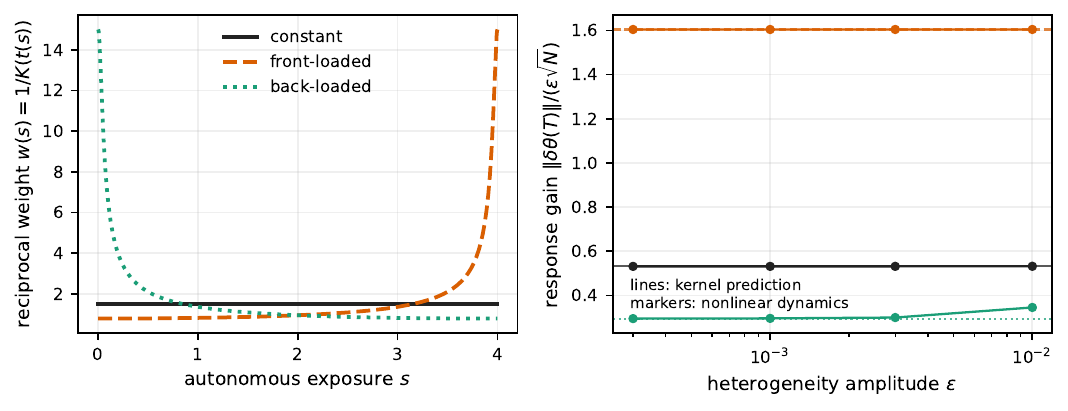}
 \caption{\label{fig:kernel}
 Reciprocal-schedule response to weak frequency heterogeneity.  Left: the three weights $w(s)$ redistribute the same mass $T$ along one autonomous orbit.  Right: nonlinear response gains (markers) converge to the first-order kernel predictions (horizontal lines).}
\end{figure}

\subsection{Spectral exposure}

To test Eq.~\eqref{eq:mode-solution} independently of random-phase nonlinear transients, we initialize the system along the normalized $\lambda_2$ eigenvector with phase rms $0.05$.  We use two connected ER graphs with $N=60$ and $p=0.08,0.14$, and a nearest-neighbor ring with $N=60$.  The autonomous unit-coupling flow is integrated with fourth-order Runge--Kutta and plotted against $\lambda_2s$.

As shown in Fig.~\ref{fig:spectral}, the normalized mode amplitudes follow $\exp(-\lambda_2s)$.  The largest relative deviations over $0\leq\lambda_2s\leq4$ are $1.7\times10^{-3}$, $4.7\times10^{-3}$, and $3.5\times10^{-6}$ for the sparse ER, denser ER, and ring graphs.  These small departures are the expected nonlinear corrections at finite initial amplitude.

\begin{figure}[t]
 \includegraphics[width=0.78\columnwidth]{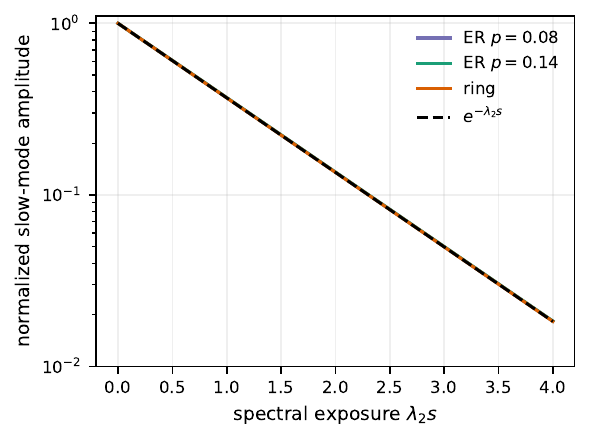}
 \caption{\label{fig:spectral}
 Slow Laplacian modes against spectral exposure.  In the controlled near-coherent regime, different graphs follow the parameter-free prediction $|c_2(s)/c_2(0)|=\exp(-\lambda_2s)$.}
\end{figure}

\subsection{Noise produces a genuine schedule effect}
\label{sec:noise-experiment}

For the stochastic test we use a connected ER graph with $N=40$, $p=0.16$, and measured $\lambda_2=1.245087$.  We take $T=10$, $S_T=7$, and use the same three smooth protocols.  For the bounded-control comparison we add Eq.~\eqref{eq:terminal-bangbang} with $K_{\max}=1.4$, which switches at $t=5$.  The left panel of Fig.~\ref{fig:noise} evaluates the exact variance equation~\eqref{eq:ou-variance} for the slow mode with $D=0.02$ and $v_2(0)=0.20$.  The equal-exposure final variances are
\begin{equation}
 v_2(T)=
 \begin{cases}
 0.02295,&K_0,\\
 0.09285,&K_{\rm F},\\
 0.01213,&K_{\rm B},\\
 0.01147,&K_*.
 \end{cases}
 \label{eq:ou-numerical}
\end{equation}

We test Eq.~\eqref{eq:ou-variance} directly in the full nonlinear network using an ensemble of $512$ near-coherent states with Gaussian slow-mode coefficient of variance $v_2(0)=0.20$.  Each realization is projected onto the normalized $\lambda_2$ eigenvector throughout the stochastic evolution.  The measured final variances are
\begin{equation}
 \widehat v_2(T)=0.0228(15),\ 0.0937(58),\ 0.01265(86),\ 0.01203(73)
 \label{eq:mode-validation}
\end{equation}
for the constant, front-loaded, back-loaded, and bounded-optimal protocols, with standard errors in parentheses.  They differ from the exact linear predictions in Eq.~\eqref{eq:ou-numerical} by $0.6\%$, $0.9\%$, $4.3\%$, and $4.8\%$, respectively, and all four predictions lie within one standard error of the measurement.  Figure~\ref{fig:noise} shows agreement over the complete time evolution and confirms that $K_*$ gives the smallest terminal variance in the admissible class.

We separately integrate the full nonlinear stochastic network from $128$ random initial configurations per point using Euler--Maruyama with $\Delta t=10^{-2}$.  At $D=0$, equal-exposure protocols reach the same state up to discretization error.  At $D=0.03$, the final mean coherences are
\begin{equation}
 R(T)=0.99510(12),\quad0.96853(75),\quad0.997370(65)
 \label{eq:nonlinear-noise-results}
\end{equation}
for constant, front-loaded, and back-loaded schedules, where parentheses denote the standard error in the last quoted digits.  Because the back-loaded protocol has the largest surviving exposure $S_T-S(u)$ at every $u$, it attenuates noise injected throughout the interval most strongly.  The front-loaded protocol instead leaves a weakly coupled terminal interval in which phase fluctuations accumulate.

\begin{figure}[t]
 \includegraphics[width=\columnwidth]{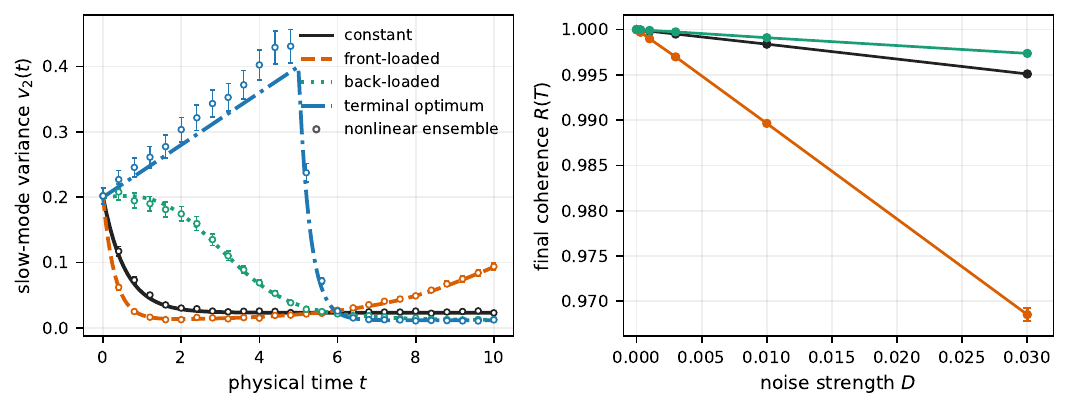}
 \caption{\label{fig:noise}
 Stochastic breaking of exposure equivalence.  Left: exact slow-mode variance from Eq.~\eqref{eq:ou-variance} (curves) and direct projections of $512$ nonlinear-network realizations (open circles).  The blue terminal bang--bang protocol is the bounded optimum.  Right: final coherence for the three smooth schedules from random initial phases versus $D$.  Error bars are standard errors.}
\end{figure}

\subsection{Phase estimability under controlled observation noise}

We test Proposition~\ref{prop:phase-error} using configurations drawn from von Mises distributions with concentrations
$0.3,0.6,1,2,4,8$ and sizes $N=20,50,100,200$.  Each microscopic configuration is held fixed while $320$ independent observation errors of standard deviation $\sigma=0.04$ are applied.  After removing configurations with $R<0.04$, the test contains $670$ configurations.

Figure~\ref{fig:estimability} shows the wrapped phase mean-squared error against $NR^2$.  Rescaling by $NR^2/\sigma^2$ removes the leading size and coherence dependence: the median relative difference from the configuration-dependent prediction $q_2$ is $5.3\%$.  For the known-template estimator in Eq.~\eqref{eq:template-estimator}, the median ratio of the observed MSE to the Cram\'er--Rao value $\sigma^2/N$ is $0.996$.  The measured efficiency of $\arg\widetilde Z$ follows $R^2/q_2$, with median absolute discrepancy $0.011$.

\begin{figure*}[t]
 \includegraphics[width=\textwidth]{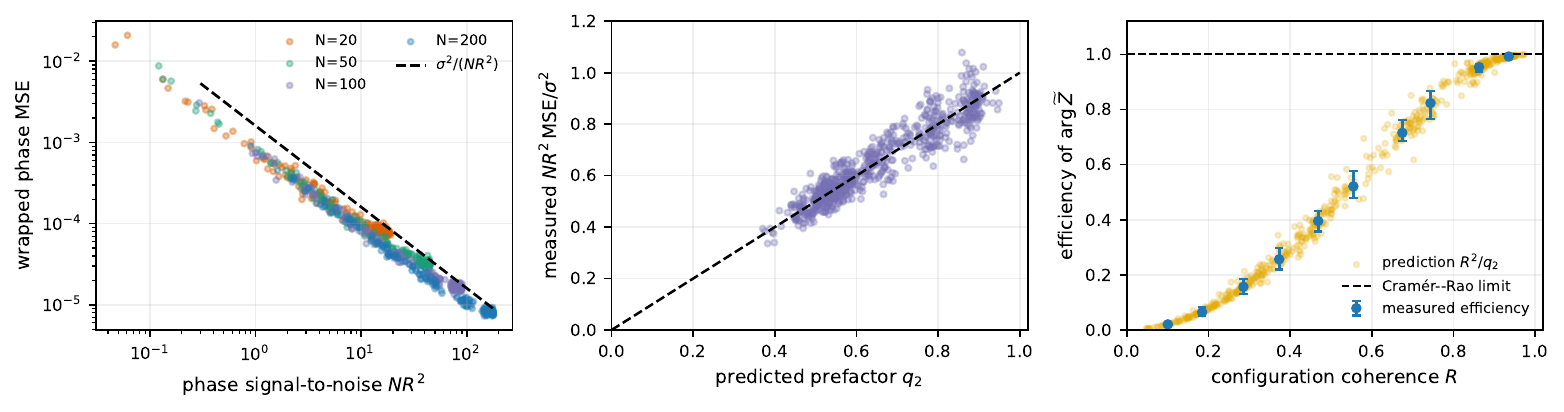}
 \caption{\label{fig:estimability}
 Information retained by the order parameter.  Left: wrapped phase MSE of $\arg\widetilde Z$ versus $NR^2$.  Center: scaled error compared configuration by configuration with the predicted prefactor $q_2$.  Right: predicted efficiency $R^2/q_2$ and binned measurements relative to the template-aware estimator; the dashed line is the Cram\'er--Rao limit.}
\end{figure*}

\subsection{Direct measurement of winding sectors}

We finally simulate $M=400$ nearest-neighbor rings with $N=80$ from i.i.d. uniform phases in the autonomous exposure clock.  The ring gap is
$\lambda_2=2-2\cos(2\pi/N)=0.00616533$.  A second-order Runge--Kutta scheme with $\Delta s=0.04$ is used up to
$\lambda_2s=3.2$.  We record $W$, $R$, and $C_b$ separately rather than attributing all deviations to the small gap.

The sector probabilities settle by $\lambda_2s\simeq0.05$ and remain unchanged over the rest of the simulation: at the final exposure, $24.25\%$ of runs have $W=0$, $39.5\%$ have $|W|=1$, $26.5\%$ have $|W|=2$, and $9.75\%$ have $|W|\geq3$.  Within each sector, local relaxation continues.  At $\lambda_2s=3.2$, the conditional means are
\begin{center}
\begin{tabular}{c@{\quad}ccc}
\toprule
$|W|$ & count & $\E[R\mid |W|]$ & $\E[C_b\mid |W|]$\\
\midrule
0 & 97  & 0.9954 & 0.99997\\
1 & 158 & 0.0637 & 0.99689\\
2 & 106 & 0.00259& 0.98766\\
3 & 28  & 0.00010& 0.97233\\
4 & 10  & $3.2\times10^{-6}$ & 0.95102\\
\bottomrule
\end{tabular}
\end{center}
The single $|W|=5$ realization has $C_b=0.92387$ and
$R\simeq10^{-9}$.  Figure~\ref{fig:ring} displays both the early sector selection and the separation between local and global order.

\begin{figure}[t]
 \includegraphics[width=\columnwidth]{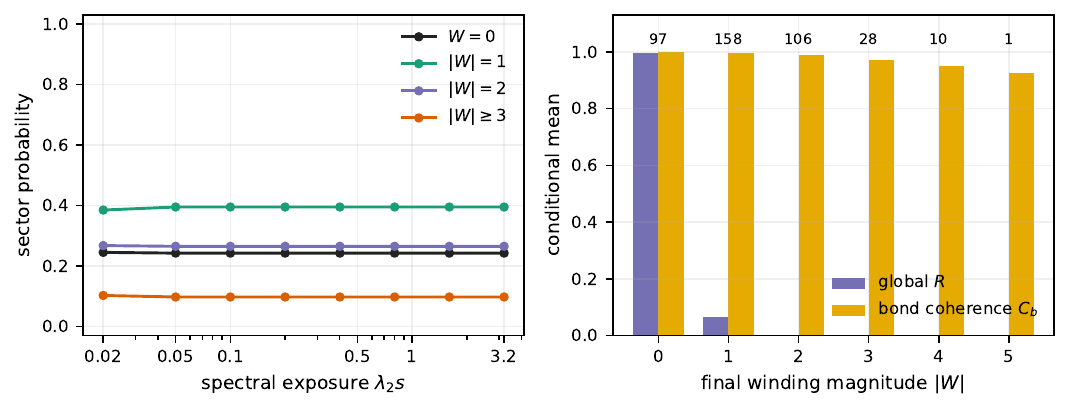}
 \caption{\label{fig:ring}
 Winding sectors on a periodic ring.  Left: sector probabilities stabilize at small spectral exposure.  Right: final global coherence and local bond coherence conditioned on $|W|$; numbers above the bars are realization counts.}
\end{figure}

\section{Discussion}
\label{sec:discussion}

The exact clock reduction changes the interpretation of scalar coupling ramps in the identical overdamped model.  If two deterministic protocols differ only in how the same exposure is distributed in time, their states cannot differ at matched exposure.  A reported rate effect obtained by changing $\tau$ while also changing $\int K\dd t$ does not by itself establish nonadiabaticity.

The exposure clock acts as a gauge redundancy of the deterministic scalar sector: changing $K(t)$ changes the parametrization of an orbit, not the orbit itself.  The results can therefore be summarized by their transformation under this reparametrization.

\begin{table*}[t]
\centering
\caption{\label{tab:classification}Classification under the exposure-clock reparametrization.}
\begin{tabular}{p{0.17\textwidth}p{0.34\textwidth}p{0.39\textwidth}}
\toprule
Class & Mathematical signature & Representative quantities or mechanisms\\
\midrule
Invariant & Depends on the autonomous orbit at $S$ & Deterministic state, spectral exposure $\lambda_\alpha S$, basin and winding sector\\
Reciprocal weighted & Linear functional or covariance integral with $w(s)=1/K(t(s))$ & Weak frequency heterogeneity, autonomous pinning or drift, laboratory-time noise\\
Outside scalar reparametrization & Additional dependence on $t(s)$, derivatives of the clock, or noncommuting generators & Explicitly timed forcing, inertia, delays, time-dependent graph structure\\
\bottomrule
\end{tabular}
\end{table*}

This does not make time-dependent coupling physically empty.  It identifies what must be specified.  If the experimental resource is physical time rather than exposure, different schedules reach the same orbit points at different times.  Finite-exposure protocols can arrest evolution before an attractor is reached.  Additive perturbations redistribute their action along the orbit through $w(s)$, and topology can select an invariant attractor with strong local but vanishing global order.  These mechanisms require different observables.

The stochastic calculation also clarifies when an adiabaticity criterion is legitimate.  In a linear mode, the instantaneous equilibrium variance $D/[K(t)\lambda_\alpha]$ moves as $K$ changes.  Comparing its relaxation time with the protocol time measures covariance tracking.  By contrast, the deterministic fixed point and its basin are independent of a positive scalar $K$; there is no moving deterministic target to track.

The phase analysis similarly separates geometry, information, and dynamics.  The argument of a complex order parameter is ill-conditioned near $R=0$, but the full phase sample retains Fisher information when a microscopic template is known.  The efficiency $R^2/q_2$ quantifies loss caused by compression to $Z$.  Once coherence is present, the neutral collective phase still undergoes Brownian diffusion at rate $2D/N$.  A complete operational statement must therefore identify the observation model, statistic, and dynamical quantity.

\subsection{Boundary of the Kibble--Zurek interpretation}

The conventional Kibble--Zurek mechanism concerns a finite-rate passage through a critical point, where a diverging relaxation time prevents the state from tracking the changing equilibrium structure~\cite{Kibble1976,Zurek1985}.  For $K>0$ in the exact-clock model, changing $K$ neither moves equilibria nor changes their stability signatures relative to the autonomous clock.  There is therefore no $K$-dependent deterministic critical manifold to cross, and conventional Kibble--Zurek freeze-out is unavailable in this sector.

The same statement identifies where that interpretation can become valid.  Frequency disorder creates a $K$-dependent entrainment threshold, and laboratory-time noise creates a $K$-dependent invariant measure; both terms break clock equivalence.  Experiments on synchronization fronts that exhibit defect trapping cross an actual secondary bifurcation and belong to this clock-breaking class~\cite{Miranda2013}.  Quantifying how a rate-controlled defect law opens from the exact-clock limit requires a separate finite-size study: on a local ring, the locking threshold and winding statistics are not interchangeable with the mean-field critical point.  The present result supplies the diagnostic boundary rather than claiming that crossover without the required scaling analysis.

Finally, $\lambda_2$ remains important but has a delimited role.  It controls the asymptotic decay of the slowest linear mode within a coherent basin and enters mode-dependent protocol objectives.  It does not encode the entire nonlinear transient, the basin volume of twisted states, or phase-slip kinetics between sectors.  Comparing graph families solely through $\lambda_2$ can therefore obscure rather than reveal topological obstruction.

\section{Conclusions}

The question motivating this work was whether changing the global coupling strength produces a genuinely driven synchronization problem, or merely changes the speed at which an otherwise autonomous dynamics is traversed.  

For identical overdamped oscillators on a fixed graph, the answer is exact: a scalar protocol $K(t)$ acts only as an internal clock.  Once time is replaced by the accumulated exposure $S(t)=\int_0^tK(u)\dd u$, all positive protocols generate the same deterministic orbit.  Equal-exposure protocols therefore cannot produce distinct deterministic freeze-out states.  

An integrable decay of the coupling can nevertheless arrest the evolution, but it does so because the system receives only a finite amount of interaction time, not because it fails to follow a moving equilibrium.  In the coherent regime, the combinations $\lambda_\alpha S$ then describe how far each graph mode has progressed along this common orbit.

This equivalence is useful less as an endpoint than as a reference against which genuine protocol dependence can be identified.  A perturbation that is not multiplied by $K(t)$ reappears in the exposure clock with a reciprocal schedule weight.  Consequently, weak frequency heterogeneity, pinning, or an external drift acquires a universal first-order response kernel weighted by $1/K$, while laboratory-time noise produces the corresponding reciprocal-weighted covariance.  

Once a perturbation such as noise or a small frequency mismatch is added, two schedules with the same total coupling can produce different outcomes because the timing of the coupling now matters. For noise, the reason is simple. A front-loaded protocol synchronizes the system strongly at the beginning but leaves a weakly coupled interval during which newly generated fluctuations can accumulate. A back-loaded protocol instead applies the strongest relaxation near the end and therefore suppresses more of the noise accumulated during the evolution. If both the total coupling and the maximum allowed coupling strength are fixed, the smallest final fluctuations are consequently obtained by postponing the coupling and applying it at its maximum value over the final part of the protocol. This result depends on what is held fixed: with a different cost, such as one that penalizes large coupling strengths, the optimal schedule need not have this form and may depend on the graph spectrum. Inertia and changes in the network structure are more fundamental departures, because they modify the dynamics itself rather than merely changing when an external perturbation acts.

The same separation is needed at the level of observables.  A small value of the global order parameter does not have a unique interpretation.  The estimator $\arg Z$ becomes inefficient as $R\to0$, even though the full collection of phase observations can retain finite Fisher information about a rigid rotation when the microscopic template is known.  On a periodic graph, the ambiguity is dynamical as well as statistical: stable winding states may possess strong local coherence while their global order parameter vanishes.  Thus poor global phase estimability, weak local synchronization, and topological cancellation are distinct phenomena, although all can appear simply as a small value of $R$.

These results suggest a practical order of analysis for time-dependent synchronization.  The first test should be whether protocols with the same accumulated exposure generate the same deterministic state.  If they do, apparent rate effects should be interpreted relative to the exposure clock.  If they do not, the term that breaks the equivalence and the observable sensitive to it should be identified explicitly.  Only when additional physics creates a $K$-dependent equilibrium, invariant measure, or critical structure is there a changing target whose loss of tracking can support an adiabatic or Kibble--Zurek interpretation.  In this sense, clock equivalence is not merely a restriction of the identical-oscillator model: it supplies the baseline from which the onset of genuine rate dependence can be defined and measured.

\section*{Acknowledgments}

The work
of V.S. is supported by the Spanish grants PID2023-148162NB-C21,
and CEX2023-001292-S.

\section*{Data and code availability}
The complete simulation and figure-generation script, fixed seeds, and numerical summary are supplied with this manuscript in this GitHub repository  \url{https://github.com/vsanz/scalar-coupling-protocols}. An archival
DOI will be added upon submission.

\appendix

\section{Numerical convergence and reproducibility}
\label{app:numerics}

All simulations use the fixed master seed $260305668$.  ER graphs are regenerated until connected.  Their Laplacian spectra are computed by dense symmetric diagonalization.  Deterministic matched-protocol trajectories and the reciprocal-kernel test use adaptive DOP853; the latter integrates the autonomous orbit and its variational equation independently before comparison with the full heterogeneous dynamics.  The stochastic network uses Euler--Maruyama with $\Delta t=10^{-2}$, including a grid-aligned representation of the terminal bang--bang switch.  The ring calculation uses a midpoint second-order Runge--Kutta method in the exposure clock with $\Delta s=0.04$.  The near-coherent spectral-mode tests use fourth-order Runge--Kutta with step sizes selected from the largest Laplacian eigenvalue.

The code writes all plotted numerical values to a machine-readable JSON file.  Repeating the clock-equivalence experiment at half the maximum DOP853 step changes the final $R$ below $10^{-11}$.  Repeating the ring experiment with $\Delta s=0.02$ for a fixed subset of $100$ initial conditions leaves all final winding labels unchanged and changes conditional bond coherences below $3\times10^{-5}$.  A coupled-Brownian-path check of the back-loaded stochastic mode experiment, using a separate ensemble of $128$ coupled paths rather than the $512$ realizations in Sec.~\ref{sec:noise-experiment}, gives final sample variances $0.013004$ and $0.012966$ at $\Delta t=0.01$ and $0.005$, a relative step-size difference of $2.9\times10^{-3}$.

\section{Why raw phase variance is not a global observable}

For a circular random variable $\Psi$, the linear variance depends on the choice of branch whenever probability mass approaches $\pm\pi$.  Two standard branch-independent summaries are the mean resultant length
\begin{equation}
 \rho_\Psi=|\E\e^{\ii\Psi}|
 \label{eq:circular-resultant}
\end{equation}
and circular variance $1-\rho_\Psi$.  For a gauge-invariant temporal diagnostic one may instead use phase increments
\begin{equation}
 \Delta\Psi(t;t_0)=\wrap[\Psi(t)-\Psi(t_0)]
 \label{eq:phase-increments}
\end{equation}
and report either $\E[(\Delta\Psi)^2]$ while increments remain localized or
$1-|\E\e^{\ii\Delta\Psi}|$ globally.  Proposition~\ref{prop:phase-error} concerns the first, local regime under a specified observation model.

\begingroup
\small
\setlength{\bibsep}{0pt plus 0.2ex}
\bibliographystyle{unsrtnat}
\bibliography{refs}
\endgroup

\end{document}